\documentclass[a4paper,11pt]{amsart}

\usepackage{graphicx}
\usepackage{amsmath}
\usepackage{amssymb}

\addtolength{\evensidemargin}{-15mm}
\addtolength{\oddsidemargin}{-15mm}
\addtolength{\textwidth}{30mm}
\addtolength{\textheight}{20mm}
\addtolength{\topmargin}{-10mm}

\newtheorem{thm}{Theorem}

\theoremstyle{definition}


\theoremstyle{remark}


\theoremstyle{plain}



\def\NN{{\mathbb N}}

\def\RR{{\mathbb R}}

\def\ZZ{{\mathbb Z}}

\def\veca{{\text{\boldmath$a$}}}

\def\vece{{\text{\boldmath$e$}}}

\def\vecm{{\text{\boldmath$m$}}}

\def\vecq{{\text{\boldmath$q$}}}

\def\vecu{{\text{\boldmath$u$}}}
\def\vecv{{\text{\boldmath$v$}}}

\def\vecw{{\text{\boldmath$w$}}}
\def\vecx{{\text{\boldmath$x$}}}

\def\vecy{{\text{\boldmath$y$}}}

\def\vecbeta{{\text{\boldmath$\beta$}}}

\def\vecnull{{\text{\boldmath$0$}}}

\def\scrB{{\mathcal B}}

\def\scrK{{\mathcal K}}
\def\scrL{{\mathcal L}}

\def\scrS{{\mathcal S}}

\def\scrV{{\mathcal V}}

\def\C{\operatorname{C{}}}

\def\Li{\operatorname{Li}}

\def\S{\operatorname{S{}}}

\def\SL{\operatorname{SL}}
\def\ASL{\operatorname{ASL}}

\def\O{\operatorname{O{}}}
\def\T{\operatorname{T{}}}

\def\sgn{\operatorname{sgn}}

\def\vol{\operatorname{vol}}

\def\SLSL{\SL(2,\ZZ)\backslash\SL(2,\RR)}


\def\Onder#1#2#3#4#5{#1 \setbox0=\hbox{$#1$}\setbox1=\hbox{$#2$}
       \dimen0=.5\wd0 \dimen1=\dimen0 \dimen2=\dp0 \dimen3=\dimen2
       \advance\dimen0 by .5\wd1 \advance\dimen0 by -#4
       \advance\dimen1 by -.5\wd1 \advance\dimen1 by -#4
       \advance\dimen2 by -#3 \advance\dimen2 by \ht1
       \advance\dimen2 by 0.3ex \advance\dimen3 by #5
        \kern-\dimen0\raisebox{-\dimen2}[0ex][\dimen3]{\box1}
       \kern\dimen1}

\newcommand{\R}{\mathbb{R}}
\newcommand{\Z}{\mathbb{Z}}

\newcommand{\sfrac}[2]{{\textstyle \frac {#1}{#2}}}
\newcommand{\col}{\: : \:}

\newcommand{\bn}{\mathbf{0}}


\title{Kinetic transport in the two-dimensional periodic Lorentz gas}
\author{Jens Marklof}
\author{Andreas Str\"ombergsson}
\address{School of Mathematics, University of Bristol,
Bristol BS8 1TW, U.K.\newline
\rule[0ex]{0ex}{0ex} \hspace{8pt}{\tt j.marklof@bristol.ac.uk}}
\address{Department of Mathematics, Box 480, Uppsala University,
SE-75106 Uppsala, Sweden\newline
\rule[0ex]{0ex}{0ex} \hspace{8pt}{\tt astrombe@math.uu.se}}
\date{31 March 2008}
\thanks{J.M.\ has been supported by EPSRC Research Grants GR/T28058/01 and GR/S87461/01, and a Philip Leverhulme Prize. A.S.\ is a Royal Swedish Academy of Sciences Research Fellow supported by a grant from the Knut and Alice Wallenberg Foundation.}

\begin{document}

\begin{abstract}
The periodic Lorentz gas describes an ensemble of non-interacting point particles in a periodic array of spherical scatterers. We have recently shown that, in the limit of small scatterer density (Boltzmann-Grad limit), the macroscopic dynamics converges to a stochastic process, whose kinetic transport equation is not the linear Boltzmann equation---in contrast to the Lorentz gas with a disordered scatterer configuration. The present paper focuses on the two-dimensional set-up, and reports an explicit, elementary formula for the collision kernel of the transport equation. 
\end{abstract}

\maketitle

One of the central challenges in kinetic theory is the derivation of {\em macroscopic} evolution equations---describing for example the dynamics of an electron gas---from the underlying fundamental {\em microscopic} laws of classical or quantum mechanics. An iconic mathematical model in this research area is the Lorentz gas \cite{Lorentz}, which describes an ensemble of non-interacting point particles in an infinite array of spherical scatterers. In the case of a disordered scatterer configuration, well known results by Gallavotti \cite{Gallavotti69}, Spohn \cite{Spohn78}, and Boldrighini, Bunimovich and Sinai \cite{Boldrighini83} show that the time evolution of a macroscopic particle cloud is governed, in the limit of small scatterer density (Boltzmann-Grad limit), by the linear Boltzmann equation. We have recently proved an analogous statement for a periodic configuration of scatterers \cite{partI}, \cite{partII}. In this case the linear Boltzmann equation fails (cf.~also Golse \cite{Golse07}), and the random flight process that emerges in the Boltzmann-Grad limit is substantially more complicated.

In the present paper we focus on the two-dimensional case, and derive explicit formulae for the collision kernels of the limiting random flight process. These include information not only of the velocity before and after the collision (as in the case of the linear Boltzmann equation), but also the path length until the next hit and the velocity thereafter. Our formulae thus generalize those for the limiting distributions of the free path length found by Dahlqvist \cite{Dahlqvist97}, Boca, Gologan and Zaharescu \cite{Boca03}, and Boca and Zaharescu \cite{Boca07}. The higher dimensional case is more difficult, and we refer the reader to \cite{partI}, \cite{partII}, \cite{partIV} for further information. The asymptotic estimates for the distribution tails in \cite{partIV} improve the bounds by Bourgain, Golse and Wennberg \cite{Bourgain98}. 

Our results on the Boltzmann-Grad limit complement classical studies in ergodic theory, where the scatterer size remains fixed. Bunimovich and Sinai \cite{Bunimovich80} showed that the dynamics of the two-dimensional periodic Lorentz gas is diffusive in the limit of large times, and satisfies a central limit theorem. They assumed that the periodic Lorentz gas has a finite horizon, i.e., the scatterers are configured in such a way that the path length between consecutive collisions is bounded. The corresponding result for infinite horizon has recently been established by Szasz and Varju \cite{Szasz07} following initial work by Bleher \cite{Bleher92}. See also the recent papers by Dolgopyat, Szasz and Varju \cite{dolgopyat}, and Melbourne and Nicol \cite{Melbourne05}, \cite{Melbourne07} for related studies of statistical properties of the two-dimensional periodic Lorentz gas. The case of higher dimensions is still open, even for models with finite horizon, cf.~Chernov \cite{Chernov94}, and Balint and Toth \cite{Balint07}. Scaling limits that are intermediate between the Boltzmann-Grad and the limit of large times have been explored by Klages and Dellago \cite{Klages00}.

\centerline{-----}

This paper is organized as follows. We begin by recalling the necessary details from \cite{partI}, \cite{partII}, then derive our main results---the explicit formulae for the collision kernels of the two-dimensional Lorentz gas in the Boltzmann-Grad limit. The final two sections study the implications for the distribution of free path lengths, and asymptotic properties of the collision kernels.

\centerline{-----}

To describe the main results of \cite{partI}, \cite{partII}, let us fix a euclidean lattice $\scrL\subset\RR^d$, and assume (without loss of generality) that its fundamental cell has volume one. We denote by $\scrK_\rho\subset\RR^d$ the complement of the set $\scrB_\rho^d + \scrL$ (the ``billiard domain''), and $\T^1(\scrK_\rho)=\scrK_\rho\times\S_1^{d-1}$ its unit tangent bundle (the ``phase space''), with $\vecq\in\scrK_\rho$ the position and $\vecv\in\S_1^{d-1}$ the velocity of the particle. Here $\scrB_\rho^d$ denotes the open ball of radius $\rho$, centered at the origin, and $\S_1^{d-1}$ the unit sphere. The dynamics of a particle in the Lorentz gas is defined as the motion with unit speed along straight lines, and specular reflection at the balls $\scrB_\rho^d+\scrL$. (We will here also permit more general scattering processes.) In order to pass to the Boltzmann-Grad limit, it is convenient to rescale the length units in such a way that the mean free path length remains constant as $\rho\to 0$. This is achieved by introducing the macroscopic coordinates $(\vecx,\vecv)=(\rho^{d-1}\vecq,\vecv)\in\T^1(\rho^{d-1}\scrK_\rho)$.

We take $(\vecx_0,\vecv_0)\in\T^1(\rho^{d-1}\scrK_\rho)$ as the initial data, and denote by $\tau_1(\vecx_0,\vecv_0)$ the time until the first collision (the free path length), $\vecv_1(\vecx_0,\vecv_0)$ the velocity thereafter, and analogously by $\tau_k(\vecx_0,\vecv_0)$ and $\vecv_k(\vecx_0,\vecv_0)$ the time between the $(k-1)$st and $k$th collision and the subsequent velocity. The following theorem proves the existence of a joint limiting distribution for random initial data in the limit $\rho\to 0$.
The statement is a variant of the more general Theorem 4.1 in \cite{partII}.

\begin{thm}\label{pathwayThm}
Fix $n\in\NN$. For any Borel probability measure $\Lambda$ on $\T^1(\RR^d)$ which is
absolutely continuous with respect to the Liouville measure $\vol_{\RR^d}\times\vol_{\S_1^{d-1}}$,
and for any bounded continuous function $f:\T^1(\RR^d)\times (\R_{>0} \times \S_1^{d-1})^n\to \RR$, we have
\begin{multline}\label{pathwayEq}
\lim_{\rho\to 0}  \int_{\T^1(\rho^{d-1}\scrK_\rho)} f\big(\vecx_0,\vecv_0, \tau_1(\vecx_0,\vecv_0),\vecv_1(\vecx_0,\vecv_0), \ldots, \tau_n(\vecx_0,\vecv_0),\vecv_n(\vecx_0,\vecv_0) \big) \, d\Lambda(\vecx_0,\vecv_0) 
\\ 
 =  \int_{\T^1(\RR^d)\times (\R_{>0}\times \S_1^{d-1})^n}  
f\big(\vecx_0,\vecv_0,\xi_1,\vecv_1,\ldots,\xi_n,\vecv_n \big)  p(\vecv_0,\xi_1,\vecv_1) 
p_{\bn,\vecbeta_{\vecv_0}^+}(\vecv_1,\xi_2,\vecv_2)  \\
\cdots p_{\bn,\vecbeta_{\vecv_{n-2}}^+}(\vecv_{n-1},\xi_n,\vecv_n)
\, d\Lambda(\vecx_0,\vecv_0)\,\prod_{k=1}^n  d\xi_k \,
d\!\vol_{\S_1^{d-1}}(\vecv_k)  .
\end{multline}
\end{thm}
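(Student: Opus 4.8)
The plan is to obtain the theorem by specializing the more general Theorem~4.1 of \cite{partII}, whose proof in turn rests on the equidistribution results of \cite{partI}. Recall the mechanism. The free path length $\tau_1(\vecx_0,\vecv_0)$ and the post-collision velocity $\vecv_1(\vecx_0,\vecv_0)$ are determined by the position of the microscopic point $\vecq_0=\rho^{-(d-1)}\vecx_0$ relative to the lattice $\scrL$: one travels from $\vecq_0$ in direction $\vecv_0$ until entering a ball of the array $\scrB_\rho^d+\scrL$, and then applies the scattering map. After the length rescaling, this data is a continuous function of the velocity $\vecv_0$ together with the rescaled affine lattice $\rho^{d-1}(\scrL-\vecq_0)$ expressed in a frame adapted to $\vecv_0$, i.e.\ of a point in a fixed homogeneous space of affine lattices, essentially $\ASL(d,\ZZ)\backslash\ASL(d,\RR)$. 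The first input from \cite{partI} is that, as $\rho\to0$ and for $\Lambda$-random $(\vecx_0,\vecv_0)$, this point equidistributes with respect to Haar measure, with enough uniformity that the joint law of $(\vecx_0,\vecv_0,\tau_1,\vecv_1)$ converges; pushing the limit forward under $(\vecx_0,\vecv_0)\mapsto(\vecx_0,\vecv_0,\tau_1,\vecv_1)$ and disintegrating in $(\vecx_0,\vecv_0)$ produces the density $p(\vecv_0,\xi_1,\vecv_1)$, which by rotational covariance of the construction depends on $\vecv_0$ only through the adapted frame and not on $\vecx_0$ at all.

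The later collisions are handled by iterating this picture. After the $(k-1)$st collision the particle lies on the boundary of some ball $\scrB_\rho^d+\vecl$, $\vecl\in\scrL$, so the relevant affine lattice is no longer Haar-random: it is conditioned to carry a lattice point within distance $O(\rho)$ of the particle, and \cite{partII} --- invoking the equidistribution of translates of the corresponding lower-dimensional homogeneous subset, again from \cite{partI} --- identifies the conditional limiting distribution. The structural heart of Theorem~4.1 of \cite{partII} is that this conditional distribution, hence the conditional law of $(\tau_k,\vecv_k)$ given the whole past $(\vecx_0,\vecv_0,\tau_1,\vecv_1,\ldots,\tau_{k-1},\vecv_{k-1})$, depends on the past only through the previous outgoing velocity $\vecv_{k-1}$ and the exit parameter of the $(k-1)$st collision, which the limiting dynamics records as $\vecbeta_{\vecv_{k-2}}^+$ (a function of $\vecv_{k-2}$ and $\vecv_{k-1}$), together with the fact that the particle sits at a lattice-translate of the scatterer centre, recorded by the subscript $\bn$. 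Pushing this conditional distribution forward under $(\tau_k,\vecv_k)$ yields precisely the kernel $p_{\bn,\vecbeta_{\vecv_{k-2}}^+}(\vecv_{k-1},\xi_k,\vecv_k)$.

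Putting the pieces together: Theorem~4.1 of \cite{partII} is the more general statement in which $f$ may also depend on the successive impact parameters, and \eqref{pathwayEq} is the special case in which $f$ ignores them; integrating those parameters out of the transition kernels of Theorem~4.1 (a harmless operation, since under the reflection dynamics each impact parameter is determined by the two adjacent velocities) leaves exactly the product $p(\vecv_0,\xi_1,\vecv_1)\,p_{\bn,\vecbeta_{\vecv_0}^+}(\vecv_1,\xi_2,\vecv_2)\cdots$ appearing on the right of \eqref{pathwayEq}. Interchanging limit and integral is legitimate since each kernel is a probability density on $\R_{>0}\times\S_1^{d-1}$ and $f$ is bounded continuous, so dominated convergence applies.

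The genuine work all lies in \cite{partI} and \cite{partII} and is imported wholesale; two points there are the real obstacles. First, the equidistribution of the translated homogeneous pieces of $\SL(d,\ZZ)\backslash\SL(d,\RR)$ must hold uniformly in the conditioning data, so that it can be applied $n$ times in succession along the collision chain. Second, one must exclude escape of mass caused by anomalously long free paths or near-tangential hits, so that the family of pushforward measures is tight and a bounded continuous $f$ genuinely detects the limit. What is left to do here is bookkeeping: to check that the marginal kernels extracted from Theorem~4.1 of \cite{partII} are exactly the objects $p$ and $p_{\bn,\vecbeta}$ for which explicit formulae are derived in the next section, which reduces to fixing the conventions for the scattering map and the impact parameter $\vecbeta$ and verifying that the limiting process has the claimed one-step memory $(\vecv_{k-1},\vecbeta_{\vecv_{k-2}}^+)$ and no more.
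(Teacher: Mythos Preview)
Your approach matches the paper's: the paper does not prove Theorem~\ref{pathwayThm} at all but simply remarks that ``the statement is a variant of the more general Theorem~4.1 in \cite{partII},'' and you likewise obtain it by specializing that theorem. Your exposition of the underlying mechanism (equidistribution on $\ASL(d,\ZZ)\backslash\ASL(d,\RR)$ and $\SL(d,\ZZ)\backslash\SL(d,\RR)$, the conditional structure at later collisions, and the one-step memory) is accurate and considerably more detailed than anything the present paper supplies, but it is commentary rather than new argument; the paper is content to import the result wholesale.
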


The collision kernels $p(\vecv_0,\xi_1,\vecv_1)$ and $p_{\bn,\vecbeta_{\vecv_0}^+}(\vecv_1,\xi_2,\vecv_2)$ are in particular independent of the choice of $\Lambda$ and $\scrL$, and are characterized in \cite{partI}, \cite{partII} by certain measures on the homogeneous spaces $\ASL(d,\ZZ)\backslash\ASL(d,\RR)$ and $\SL(d,\ZZ)\backslash\SL(d,\RR)$, respectively. In the present paper we will show how, in dimension $d=2$, this abstract description can be turned into explicit, elementary formulae; the same task seems substantially more difficult in dimension $d>2$. 

Theorem \ref{pathwayThm} forms the key ingredient in the proof of the existence of a stochastic process that governs the particle dynamics in the limit $\rho\to 0$, see \cite{partII} for details. Specifically, a particle cloud with initial density $f_0$ evolves in time $t$ to the density $f_t$ given by 
\begin{equation}
	f_t(\vecx,\vecv)=\int_{\R_{>0}\times \S_1^{d-1}}  f(t,\vecx,\vecv,\xi,\vecv_+)\, d\xi\,
d\!\vol_{\S_1^{d-1}}(\vecv_+),
\end{equation}
where $f$ is the unique solution of the differential equation
\begin{equation} \label{FPKEQ}
	\big[ \partial_t + \vecv\cdot\nabla_\vecx - \partial_\xi \big] f(t,\vecx,\vecv,\xi,\vecv_+) 
	= \int_{\S_1^{d-1}}  f(t,\vecx,\vecv_0,0,\vecv)
p_{\vecnull,\vecbeta^+_{\vecv_{0}}}(\vecv,\xi,\vecv_+) \,
d\!\vol_{\S_1^{d-1}}(\vecv_0) 
\end{equation}
subject to the initial condition 
$f(0,\vecx,\vecv,\xi,\vecv_+)= f_0(\vecx,\vecv) p(\vecv,\xi,\vecv_+)$. 
Equation \eqref{FPKEQ} may be viewed as a substitute for the linear Boltzmann equation, which describes the macroscopic dynamics in a random, rather than periodic, configuration of scatterers \cite{Gallavotti69}, \cite{Spohn78}, \cite{Boldrighini83}.
Independently of our studies in \cite{partI}, \cite{partII}, Caglioti and Golse \cite{Caglioti07} have recently proposed an equation analogous to \eqref{FPKEQ} in dimension $d=2$, by assuming an independence hypothesis which is analogous to the statement of Theorem \ref{pathwayThm} above.

\centerline{-----}

In order to treat more general scattering processes than specular reflection, we recall the basic facts from \cite[Section 2.2]{partII}, limiting our attention to the case $d=2$. For a scatterer centered at the lattice point $\vecm\in\scrL\subset\RR^2$, we use the boundary coordinates $(\vecv,\vecw)\in\S_1^1\times\S_1^1$, where $\vecv$ is the velocity and $\vecw$ position so that $\vecq=\vecm+\rho\vecw$.
Let 
\begin{align}
\scrS_-:=\{(\vecv,\vecw)\in\S_1^1\times\S_1^1\col \vecv\cdot\vecw<0\}
\end{align}
be the set of {\em incoming data} at a given scatterer, i.e., the relative velocity and position at the time of collision. The corresponding {\em outgoing data} is parametrized by the set 
\begin{align}
\scrS_+:=\{(\vecv,\vecw)\in\S_1^1\times\S_1^1\col \vecv\cdot\vecw>0\}.
\end{align}
We define the scattering map by
\begin{align}
\Theta:\scrS_-\to\scrS_+,\qquad
(\vecv_-,\vecw_-)\mapsto(\vecv_+,\vecw_+).
\end{align}
In the case of the original Lorentz gas the scattering map is 
given by specular reflection,
\begin{align} \label{LORENTZSCATTERING}
\Theta(\vecv,\vecw)=(\vecv-2(\vecv\cdot\vecw)\vecw,\vecw) .
\end{align}

Let $\Theta_1(\vecv,\vecw)\in\S_1^1$ and 
$\Theta_2(\vecv,\vecw)\in\S_1^1$ be the projection of 
$\Theta(\vecv,\vecw)\in\S_1^1$ onto the first and second component,
respectively.
We assume in the following that 
\begin{enumerate}
	\item[(i)] the scattering map $\Theta$
is \textit{spherically symmetric}, i.e., if 
$(\vecv_+,\vecw_+)=\Theta(\vecv,\vecw)$ then
$(\vecv_+K,\vecw_+K)=\Theta(\vecv K,\vecw K)$ for all
$K\in \O(2)$;
	\item[(ii)] if $\vecw=-\vecv$ then $\vecv_+=-\vecv$;
	\item[(iii)] $\Theta:\scrS_-\to\scrS_+$ is $\C^1$ and 
for each fixed $\vecv\in\S_1^1$ the map $\vecw\mapsto\Theta_1(\vecv,\vecw)$
is a $\C^1$ diffeomorphism from $\{\vecw\in\S_1^1\col\vecv\cdot\vecw<0\}$
onto some open subset of $\S_1^1$.
\end{enumerate}

We will write $\varphi(\vecv,\vecu)\in [0,2\pi)$ for the angle between
any two vectors $\vecv,\vecu\in\R^2\setminus\{\bn\}$, measured counter-clockwise from $\vecv$ to $\vecu$.
Using the spherical symmetry 
and $\Theta_1(\vecv,-\vecv)=-\vecv$ one sees that 
there exists a constant $0\leq B_\Theta<\pi$ such that for each  
$\vecv\in\S_1^1$, the image of the diffeomorphism 
$\vecw\mapsto\Theta_1(\vecv,\vecw)$ is
\begin{align} \label{VPDEF}
\scrV_{\vecv}:=\{\vecu\in\S_1^1\col B_\Theta<\varphi(\vecv,\vecu)<2\pi-B_\Theta\}.
\end{align}
Let us write $\vecbeta_\vecv^-:\scrV_\vecv\to
\{\vecw\in\S_1^1\col\vecv\cdot\vecw<0\}$ for the inverse map.
Then $\vecbeta_\vecv^-$ is spherically symmetric in the sense that
$\vecbeta_{\vecv K}^-(\vecu K)=\vecbeta_{\vecv}^-(\vecu)K$ 
for all $\vecv\in\S_1^1$, $\vecu\in\scrV_\vecv$, $K\in\O(2)$,
and in particular $\vecbeta_\vecv^-(\vecu)$ is jointly $\C^1$ in $\vecv,\vecu$.

We also define
\begin{align}
\vecbeta^+_{\vecv}(\vecu)=\Theta_2(\vecv,\vecbeta^-_{\vecv}(\vecu))
\qquad (\vecv\in\S_1^1,\:\vecu\in\scrV_\vecv).
\end{align}
The map $\vecbeta^+$ is spherically symmetric and jointly $\C^1$ in 
$\vecv,\vecu$.
In terms of the original scattering situation, the point of our notation 
is the following:
Given any $\vecv_-,\vecv_+\in\S_1^1$, there exist
$\vecw_-,\vecw_+\in\S_1^1$ such that %
$\Theta(\vecv_-,\vecw_-)=(\vecv_+,\vecw_+)$ if and only if
$B_\Theta<\varphi(\vecv_-,\vecv_+)<2\pi-B_\Theta$, and in this case $\vecw_-$ and $\vecw_+$
are uniquely determined, as $\vecw_\pm=\vecbeta^\pm_{\vecv_-}(\vecv_+)$.
For example, in the case of specular reflection \eqref{LORENTZSCATTERING} we have $B_\Theta=0$ and 
\begin{equation}
	\vecw_+=\vecw_-=\frac{\vecv_+-\vecv_-}{\|\vecv_+-\vecv_-\|} .
\end{equation}

\begin{figure}
\begin{center}
\framebox{
\begin{minipage}{0.4\textwidth}
\unitlength0.1\textwidth
\begin{picture}(10,10)(0,0)
\put(0.5,1){\includegraphics[width=0.9\textwidth]{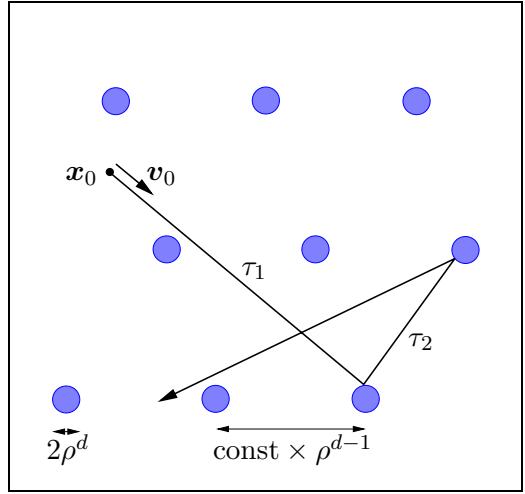}}
\put(0.8,6.4){$\vecx_0$} \put(2.5,6.4){$\vecv_0$}
\put(4.5,4.4){$\tau_1$} \put(8,2.9){$\tau_2$}
\put(0.4,0.5){$2\rho^d$} \put(3.9,0.5){$\text{const}\times\rho^{d-1}$}
\end{picture}
\end{minipage}
}
\end{center}
\caption{The periodic Lorentz gas in ``macroscopic'' coordinates ---both the lattice constant and the radius of each scatter tend to zero, in such a way that the mean free path length remains finite.}  \label{figLorentz}
\end{figure}

\begin{figure}
\begin{center}
\framebox{
\begin{minipage}{0.6\textwidth}
\unitlength0.1\textwidth
\begin{picture}(10,6)(0,0)
\put(0.5,1){\includegraphics[width=0.9\textwidth]{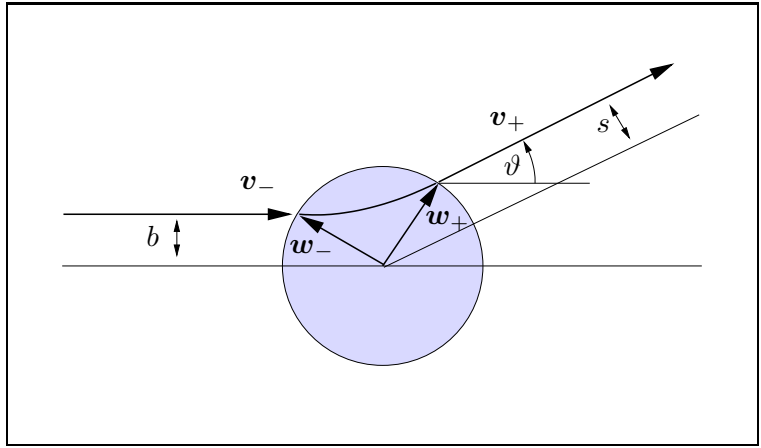}}
\put(5.6,3){$\vecw_+$}\put(3.7,2.6){$\vecw_-$}
\put(6.5,4.4){$\vecv_+$}\put(3,3.5){$\vecv_-$}
\put(6.7,3.7){$\vartheta$}\put(1.7,2.7){$b$}\put(8,4.3){$s$}
\end{picture}
\end{minipage}
}
\end{center}
\caption{Scattering in the unit ball} \label{fig}
\end{figure}

We denote the unit vectors $(1,0)$ and $(0,1)$ by $\vece_1$ and $\vece_2$, respectively. For $\vecv=(\cos\phi,\sin\phi)\in\S_1^1$ we set
\begin{equation}
	K(\vecv)=\begin{pmatrix} \cos\phi & -\sin\phi \\ \sin\phi & \cos\phi \end{pmatrix}
\end{equation}
so that $\vecv K(\vecv)=\vece_1$.
The classical impact parameter is related to the above by
\begin{equation}
	b(\vartheta)\equiv b(\vecv,\vecu)=\vecbeta_{\vece_1}^-(\vecu K(\vecv))\cdot \vece_2,
\end{equation}
where $\vartheta=\varphi(\vecv,\vecu)$.
The scattering cross section is $\sigma(\vartheta)\equiv \sigma(\vecv,\vecu)=|b'(\vartheta)|$. We define the exit parameter correspondingly by
\begin{equation}
	s(\vartheta)\equiv s(\vecu,\vecv)=\vecbeta_{\vecv K(\vecu)}^+(\vece_1) \cdot \vece_2 ,
\end{equation}
with $\vartheta=\varphi(\vecu,\vecv)$.

In the case of specular reflection \eqref{LORENTZSCATTERING}, we have $b(\vartheta)=\cos(\vartheta/2)$, $s(\vartheta)=-\cos(\vartheta/2)$. 

\centerline{-----}

We now turn to the main results of this paper.

\begin{thm} \label{PF0EXPLTHM}
For $\vecv_{0},\vecv,\vecv_+\in\S_1^1$ and $\xi>0$, the collision kernel $p_{\vecnull,\vecbeta^+_{\vecv_{0}}}$ is given by
\begin{equation}\label{formula1}
	p_{\vecnull,\vecbeta^+_{\vecv_{0}}}(\vecv,\xi,\vecv_+)
	=\sigma(\vecv,\vecv_+)\, \Phi_\vecnull\big(\xi,b(\vecv,\vecv_+),
	-s(\vecv,\vecv_0)\big)
\end{equation}
where for any $\xi>0$, $w,z\in(-1,1)$,
\begin{equation} \label{PF2DIMGENERIC}
\Phi_\vecnull(\xi,w,z)
=\frac{6}{\pi^2}
\begin{cases}
\Upsilon\Bigl(1+\frac{\xi^{-1}-\max(|w|,|z|)-1}{|w+z|}\Bigr)
& \text{if }\: w+z\neq 0
\\
0 & \text{if }\: w+z=0, \: \xi^{-1}<1+|w| \\
1 &  \text{if }\: w+z=0, \: \xi^{-1}\geq 1+|w|,
\end{cases}
\end{equation}
with
\begin{equation}
\Upsilon(x)=
\begin{cases} 
0 & \text{if }x\leq 0\\
x & \text{if }0<x<1\\
1 & \text{if }1\leq x.
\end{cases}
\end{equation}
\end{thm}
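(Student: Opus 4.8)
The plan is to start from the abstract description of $p_{\vecnull,\vecbeta^+_{\vecv_0}}$ given in \cite{partI,partII} in terms of a measure on $\SLSL$, and to make it completely explicit in dimension $d=2$. Concretely, the collision kernel $p_{\vecnull,\vecbeta^+_{\vecv_0}}(\vecv,\xi,\vecv_+)$ should be expressible as (a constant multiple of) the probability, with respect to the appropriate $\SL(2,\RR)$-invariant measure on the space of unimodular lattices (or an affine relative thereof) with a prescribed boundary condition encoded by $\vecbeta^+_{\vecv_0}$, that the first lattice point entering a suitable cone/strip in direction $\vecv$ lies at ``depth'' $\xi$ and exits the next scatterer with the correct impact data forcing the outgoing velocity $\vecv_+$. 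The first step is therefore to unwind this definition and reduce it, via the spherical symmetry assumption (i) and the rotation $K(\vecv)$, to a one-parameter family of geometric problems about a random unimodular lattice: one rotates so that $\vecv=\vece_1$, so that the relevant region becomes a fixed strip of width determined by $2\rho$-scaled scatterers, and so that the constraints on $\vecv_+$ and on the previous velocity $\vecv_0$ become the two parameters $w=b(\vecv,\vecv_+)$ and $z=-s(\vecv,\vecv_0)$.

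The second step is the explicit lattice-point computation. In two dimensions the relevant homogeneous space is $\SLSL$, and the key classical input is that the limiting distribution of the free path length in the periodic Lorentz gas is governed by the statistics of $\gcd$-type conditions on visible lattice points, which is why the constant $\tfrac{6}{\pi^2}=1/\zeta(2)$ appears (density of primitive lattice points). I would parametrize a unimodular lattice near the strip by a Farey-type / continued-fraction decomposition, or equivalently use the explicit description of the relevant measure on $\SL(2,\RR)$ coming from the horocycle flow, so that the event ``first hit at time $\xi$ with impact parameters $w$ (exit side) and $z$ (entry side)'' becomes an elementary region in a two- or three-dimensional parameter space. Integrating the (explicitly known, piecewise constant) density over that region produces the piecewise-linear function $\Upsilon$: the linear branch $0<x<1$ comes from the interior of a triangle/parallelogram, the flat branches $x\le 0$ and $x\ge 1$ from the region being empty or saturated, and the argument $1+\frac{\xi^{-1}-\max(|w|,|z|)-1}{|w+z|}$ is exactly the normalized position of the line $\{\text{first return} = \xi\}$ relative to that region. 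The degenerate case $w+z=0$ is where the two bounding lines become parallel, which is why the formula splits into the two constant cases there. The factor $\sigma(\vecv,\vecv_+)=|b'(\varphi(\vecv,\vecv_+))|$ in \eqref{formula1} is simply the Jacobian of the change of variables from the impact parameter $b$ to the outgoing velocity $\vecv_+$, and the replacement of $b(\vecv,\vecv_0)$ by $-s(\vecv,\vecv_0)$ on the incoming side reflects the relation between entry and exit boundary points at the \emph{previous} scatterer, which is built into the definition of $\vecbeta^+_{\vecv_0}$.

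I expect the main obstacle to be the honest bookkeeping in the second step: translating the $\SL(2,\RR)$-measure-theoretic event into the correct two-dimensional Euclidean region with the correct (non-trivial, $\zeta(2)$-normalized) density, and in particular getting the boundaries of that region right as functions of $\xi,w,z$ so that the piecewise structure of $\Phi_\vecnull$ emerges cleanly, including the coincidence locus $w+z=0$ and the clipping at $\max(|w|,|z|)$. The spherical-symmetry reduction and the Jacobian factor are routine; the delicate point is that the relevant lattice has the \emph{specific} affine/boundary structure dictated by $\vecbeta^+_{\vecv_0}$ rather than being a generic unimodular lattice, so one must carefully track how the constraint from the previous collision deforms the region. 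Once the region and density are pinned down, the integration itself is elementary and yields \eqref{PF2DIMGENERIC}.
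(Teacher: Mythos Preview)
Your plan is essentially the paper's approach: cite \cite{partI,partII} to reduce \eqref{formula1} to computing $\Phi_\vecnull(\xi,w,z)$ as a measure on the space of unimodular lattices containing a prescribed point, then parametrize that space explicitly and read off the answer. Two concrete points are worth sharpening, since they are what turns the vague ``integrate over a two- or three-dimensional region'' into an elementary one-variable calculation.

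First, the relevant space is not all of $X_1=\SLSL$ but the submanifold $X_1(\vecy)=\{M\in X_1:\vecy\in\ZZ^2M\}$ with $\vecy=(\xi,z+w)$; this is a countable union of horocycle orbits indexed by the divisibility class $k$ of the primitive vector mapping to $\vecy$, and the measure $\nu_\vecy$ is $6/\pi^2$ times Lebesgue measure $dv$ on each piece. The constraint $z,w\in(-1,1)$ forces $k^{-1}\vecy\in R_\xi^{(z)}$ for every $k\ge 2$, so only the $k=1$ piece contributes --- this is why the problem becomes genuinely one-dimensional, parametrized by a single $v\in(0,1)$, not two or three. Your Farey/continued-fraction intuition is in the right neighbourhood but is not needed once you have this reduction.

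Second, for fixed $v$ you must decide whether the lattice $\ZZ\veca_1+\ZZ\veca_2$ (with $\veca_1=(\xi,z+w)$, $\veca_2=(\xi v,v(z+w)+\xi^{-1})$) meets the open rectangle $R_\xi^{(z)}$. A short convexity argument shows that if \emph{any} lattice point lies in the rectangle then already one of $\veca_2$ or $\veca_1-\veca_2$ does; so the event collapses to two linear inequalities in $v$, and the measure of the admissible $v$-interval is exactly the $\Upsilon$ expression. This convexity step is the substantive geometric input you allude to but do not name; once you have it, the rest is bookkeeping and the case split at $w+z=0$ falls out automatically.
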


%

It follows that $\Phi_\bn(\xi,w,z)$ is discontinuous at each point
$(\xi,w,z)\in S$, where $S$ is the curve
$S=\bigl\{((1+|z|)^{-1},-z,z)\col -1<z<1\bigr\}$;
in fact $\Phi_\bn$ maps any neighbourhood of such a point \textit{onto}
the interval $[0,1]$.
On the other hand $\Phi_\bn$ is continuous everywhere in the complementary
region, $(\xi,w,z)\in\bigl(\R_{>0}\times(-1,1)\times(-1,1)\bigr)\setminus S$.

We now turn to the proof of Theorem \ref{PF0EXPLTHM}, which will involve computations similar to those in \cite[Sec.\ 8]{SV}.

\begin{proof}[Proof of Theorem \ref{PF0EXPLTHM}] 
Formula \eqref{formula1} follows from 
\cite[Eq.~(4.1) and Rem.\ 2.5]{partII}.
By \cite[Theorem 4.4]{partI}, we have  
\begin{align} \label{F2FORMULAPROPSTEP1}
\Phi_\vecnull(\xi,w,z)=\nu_{\vecy}\bigl(\bigl\{M\in X_1(\vecy) \col
\Z^2 M \cap R_\xi^{(z)}=\emptyset \bigr\}\bigr),
\end{align}
with $\vecy:=(\xi,z+w)$. Here $R_\xi^{(z)}$ is the open rectangle
\begin{align}
R_\xi^{(z)}=\{(x_1,x_2)\col 0<x_1<\xi,\: z-1<x_2<z+1\},
\end{align}
and 
\begin{equation}
	X_1(\vecy):=\bigl\{M\in X_1 \col \vecy\in \Z^2 M\bigr\},
	\qquad X_1:=\SLSL ,
\end{equation}
cf.\ \cite[Section 7.1]{partI}.
To define the measure $\nu_\vecy$ we express $X_1(\vecy)$ as a disjoint union $X_1(\vecy)=\bigsqcup_{k=1}^\infty
X_1(k\vece_1,\vecy)$, where
\begin{equation}\label{XQKYID}
X_1(k\vece_1,\vecy) = \begin{pmatrix} k^{-1} & 0 \\ 0 & k \end{pmatrix}
\big( P^{(k)}\backslash H \big) \begin{pmatrix} \xi & z+w \\ 0 & \xi^{-1} \end{pmatrix}, 
\end{equation}
with
\begin{equation}
	H=\left\{ \begin{pmatrix} 1 & 0 \\ v & 1 \end{pmatrix} \col v\in\RR \right\},
	\qquad P^{(k)} = \left\{ \begin{pmatrix} 1 & 0 \\ k^{-2} m & 1 \end{pmatrix} \col m\in\ZZ \right\} .
\end{equation}
The measure $\nu_\vecy$ is now the Borel probability measure on $X_1(\vecy)$ defined as $(6/\pi^2)$ times the standard Lebesgue measure $dv$ on each component $X_1(k\vece_1,\vecy)$.

Note that for each $k\geq 2$ and each
$M\in X_1(k\vece_1,\vecy)$ we have $k^{-1}\vecy\in 
\Z^2 M \cap R_\xi^{(z)}$, since $z,w\in(-1,1)$.
Hence we may replace $X_1(\vecy)$ with $X_1(\vece_1,\vecy)$ in 
\eqref{F2FORMULAPROPSTEP1}.
In view of \eqref{XQKYID},
\begin{align}
X_1(\vece_1,\vecy)=\left\{
M_v:=\begin{pmatrix} \xi & z+w \\ 
v\xi & v(z+w)+\xi^{-1} \end{pmatrix} \col v\in\R/\Z\right\},
\end{align}
and hence by \eqref{F2FORMULAPROPSTEP1}, $\Phi_\vecnull(\xi,w,z)$ equals
$(6/\pi^2)$ times the Lebesgue 
measure of the set of those $v\in(0,1)$
for which the lattice spanned by 
\begin{align}
\veca_1=(\xi,z+w) \quad\text{and}\quad\veca_2=(\xi v, v(z+w)+\xi^{-1})
\end{align}
has no point in $R_\xi^{(z)}$.

\begin{figure}
\begin{center}
\framebox{\parbox{210pt}{\hspace{150pt}\\ \rule{0pt}{0pt}\hspace{5pt}
\begin{minipage}{0.4\textwidth}
\unitlength0.1\textwidth
\begin{picture}(10,9)(0,0)
\put(1.0,0.5){\includegraphics[width=0.9\textwidth]{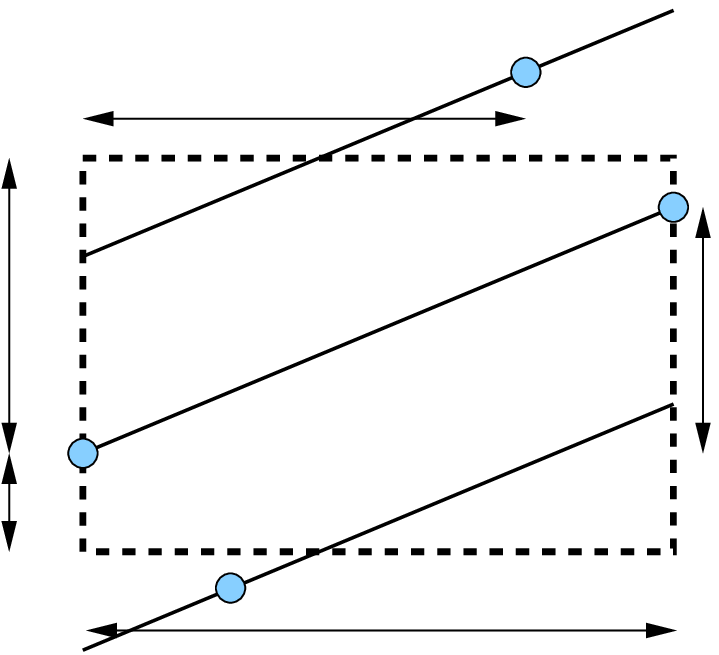}}
\put(4.0,7.6){$v\xi$} \put(3.9,4.9){$R_\xi^{(z)}$}
\put(1.38,3.1){$\bn$} \put(9.75,6.4){$\veca_1$}
\put(-0.05,4.7){$1\!\!+\!\!z$} \put(-0.05,2.2){$1\!\!-\!\!z$}
\put(10,4.6){$z\!\!+\!\!w$}
\put(4.2,1.1){$\veca_1-\veca_2$} \put(7.9,7.5){$\veca_2$}
\put(6.2,0.2){$\xi$}
\end{picture}
\end{minipage}\\
\vspace{5pt}
}}
\end{center}
\caption{The rectangle $R_\xi^{(z)}$ and points in the lattice $\Z\veca_1+\Z\veca_2$}  \label{figrect}
\end{figure}

Now for any given $v\in(0,1)$ we have $\ell\veca_1\notin R_\xi^{(z)}$
for all $\ell\in\Z$, by considering the first coordinate. It is a simple
geometric fact that for any given $\ell_1,\ell_2\in\Z$ with $|\ell_2|\geq 2$
there exists some $\ell\in\Z$ such that the point
$\ell\veca_1+(\sgn\ell_2)\veca_2$ belongs to the closed triangle with
vertices $\bn,\veca_1,\ell_1\veca_1+\ell_2\veca_2$. Thus, if
$\ell_1\veca_1+\ell_2\veca_2\in R_\xi^{(z)}$, then by convexity
$\ell\veca_1+(\sgn\ell_2)\veca_2$ lies inside $R_\xi^{(z)}$ (recall that 
$\bn$ and $\veca_1$ both lie on the boundary of $R_\xi^{(z)}$).
Therefore, if the lattice $\Z\veca_1+\Z\veca_2$ contains some point in 
$R_\xi^{(z)}$ then in fact $\ell_1\veca_1+\ell_2\veca_2\in R_\xi^{(z)}$ for
some $\ell_1\in\Z$, $\ell_2\in\{-1,1\}$.
Considering the first coordinate we also get $0<\ell_1+\ell_2v<1$, and thus
$(\ell_1,\ell_2)=(0,1)$ or $(1,-1)$.
Hence $\Phi_\vecnull(\xi,w,z)$ equals $(6/\pi^2)$ times the Lebesgue 
measure of the set of those $v\in(0,1)$
for which both $\veca_2$ and $\veca_1-\veca_2$ lie outside $R_\xi^{(z)}$.

Considering the second coordinate we see that 
$\veca_2$ and $\veca_1-\veca_2$ lie outside $R_\xi^{(z)}$ if and only if
both $v(z+w)+\xi^{-1}$ and $(1-v)(z+w)-\xi^{-1}$ lie
outside the interval $(z-1,z+1)$.
If $z+w\geq 0$ then this holds if and only if
$v(z+w)+\xi^{-1}\geq z+1$ and $(1-v)(z+w)-\xi^{-1}\leq z-1$,
or in other words, $v(z+w)\geq 1+\max(z,w)-\xi^{-1}$.
We thus obtain the formula \eqref{PF2DIMGENERIC}. 
The case $z+w<0$ is analogous.
\end{proof}

\begin{thm}
For $\vecv,\vecv_+\in\S_1^1$ and $\xi>0$, the collision kernel $p$ is given by
\begin{equation}\label{formula2}
	p(\vecv,\xi,\vecv_+)
	=\sigma(\vecv,\vecv_+)\, \Phi\big(\xi,b(\vecv,\vecv_+)\big)
\end{equation}
where
\begin{equation}\label{alphanull}
	\Phi(\xi,w) = \int_\xi^\infty \int_{-1}^1 \Phi_\vecnull(\eta,w,z) \, dz\, d\eta.
\end{equation}
\end{thm}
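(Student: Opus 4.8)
The plan is to derive formula \eqref{formula2} from the corresponding abstract description in \cite{partI}, \cite{partII} by relating the collision kernel $p$ to the kernel $p_{\vecnull,\vecbeta^+}$ already computed in Theorem \ref{PF0EXPLTHM}. First I would invoke the analogue of \cite[Eq.~(4.1), Rem.~2.5]{partII} that expresses $p(\vecv,\xi,\vecv_+)$ as $\sigma(\vecv,\vecv_+)$ times a function of $\xi$ and $b(\vecv,\vecv_+)$ only; the absence of a second velocity variable reflects that the \emph{first} collision occurs from a ``fresh'' initial point uniformly distributed on the phase space, rather than just after a previous collision. This reduces the claim to identifying the scalar function $\Phi(\xi,w)$ with the double integral in \eqref{alphanull}.

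Next I would recall from \cite[Section 7.1]{partI} (cf.\ the structure used in the proof of Theorem \ref{PF0EXPLTHM}) that $\Phi(\xi,w)$ is given by a measure on $X_1=\SLSL$ of the set of lattices $\ZZ^2M$ whose intersection with a suitable open rectangle is empty. The key structural input is that the free path length measure should be the \emph{tail} of a distribution: the probability that the (macroscopic) path length exceeds $\xi$ is obtained by integrating the joint density of path length and the transverse exit/entry parameter. Concretely, the appropriate statement from \cite{partI}, \cite{partII} expresses $\Phi(\xi,w)$ as an average of $\Phi_\vecnull(\eta,w,z)$ over $\eta>\xi$ and over the auxiliary parameter $z$ ranging over $(-1,1)$, with Lebesgue measure $d z\,d\eta$; this is the content of \eqref{alphanull}. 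The role of $\Phi_\vecnull$ here is that, for the first flight segment, one additionally integrates over the position of the initial point within its Voronoi-type cell, which in the limiting measure contributes precisely the extra $\eta$- and $z$-integrations.

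The cleanest way to carry this out is to write down the relevant measure-theoretic formula for $\Phi(\xi,w)$ from \cite{partI}, change variables so that the rectangle and lattice match those appearing in \eqref{F2FORMULAPROPSTEP1}, and then recognize the resulting expression as $\int_\xi^\infty\!\int_{-1}^1 \Phi_\vecnull(\eta,w,z)\,d z\,d\eta$. One should check the normalization constants: $\Phi_\vecnull$ carries the factor $6/\pi^2$, and a short computation confirms that $\int_{-1}^1\int_0^\infty$ of the constant-$1$ part would diverge, so the decay of $\Phi_\vecnull$ in $\eta$ is what makes \eqref{alphanull} finite and equal to $1$ at $\xi=0$, as it must be for a probability. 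I would also verify consistency with the known free-path-length distribution of Dahlqvist \cite{Dahlqvist97}, Boca--Gologan--Zaharescu \cite{Boca03} and Boca--Zaharescu \cite{Boca07} by integrating \eqref{formula2} against $\sigma$ appropriately.

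The main obstacle is purely bookkeeping: tracking the exact change of variables and the precise form of the initial-point averaging measure from \cite{partI}, \cite{partII} so that the rectangle $R_\eta^{(z)}$, the parameter ranges, and the factor $6/\pi^2$ all line up with \eqref{PF2DIMGENERIC}. There is no new analytic difficulty — once the right formula from \cite{partI} is quoted, the identity \eqref{alphanull} should follow by inspection, and the remaining task is to confirm that the ``extra'' integrations over $\eta$ and $z$ are exactly those that distinguish the first-collision kernel $p$ from the post-collision kernel $p_{\vecnull,\vecbeta^+}$.
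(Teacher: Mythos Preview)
Your plan is sound in spirit but considerably more elaborate than what the paper actually does. The paper's proof is two lines: relation \eqref{formula2} is quoted directly from \cite[Eq.~(2.27)]{partII}, and the integral identity \eqref{alphanull} from \cite[Remark~6.2]{partII}. Both facts are already established in \cite{partII}, so no new computation is needed here.

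What you propose is essentially a reconstruction of those cited results: re-deriving the factorisation $p=\sigma\cdot\Phi$ from the scattering-map formalism, and then re-proving the tail relation between $\Phi$ and $\Phi_\vecnull$ via an explicit change of variables on $X_1$. That would work, and it is instructive, but it duplicates material from \cite{partII} rather than invoking it. Note also that for \eqref{formula2} you point to \cite[Eq.~(4.1), Rem.~2.5]{partII}, which is the reference used for the \emph{post-collision} kernel $p_{\vecnull,\vecbeta^+}$ in Theorem~\ref{PF0EXPLTHM}; the correct citation for the \emph{first-collision} kernel $p$ is \cite[Eq.~(2.27)]{partII}. Your consistency checks (normalisation at $\xi=0$, recovery of the Dahlqvist/Boca--Zaharescu free-path-length densities) are good sanity tests but are not part of the proof.
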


\begin{proof}
The relation \eqref{formula2} follows from \cite[Eq.~(2.27)]{partII}, and \eqref{alphanull} from \cite[Remark 6.2]{partII}.
\end{proof}

\centerline{-----}

The inner integral in \eqref{alphanull} evaluates to
\begin{equation}\label{innerin}
	\int_{-1}^1 \Phi_\vecnull(\xi,w,z) \, dz
	=
	\begin{cases}
	\frac{12}{\pi^2} & (\xi\leq \frac12) \\[5pt]
	\frac{12}{\pi^2} (\xi^{-1}-1) & \\ \qquad + \frac{6}{\pi^2} (\xi^{-1}-1+|w|) \log (\frac{1+|w|}{\xi^{-1}-1+|w|})   & \\ \qquad + \frac{6}{\pi^2} (\xi^{-1}-1-|w|) \log (\frac{1-|w|}{\xi^{-1}-1-|w|}) & (\frac12 < \xi < \frac{1}{1+|w|}) \\[5pt]
	\frac{6}{\pi^2} (\xi^{-1}-1+|w|)(1+\log(\frac{1+|w|}{2|w|})) & \\ 
	\qquad + \frac{6}{\pi^2}(\xi^{-1}-1-|w|) \log(\frac{2|w|}{1+|w|-\xi^{-1}}) & (\frac{1}{1+|w|} < \xi < \frac{1}{1-|w|}) \\[5pt]
	0 & (\xi\geq \frac{1}{1-|w|}) .  
	\end{cases}
\end{equation}
To state the result of the second integration in \eqref{alphanull}, we 
define the auxiliary function $\Psi(a)$ for $a>0$ as the
solution to $\Psi'(a)=(a^{-1}-1)\log \bigl|a^{-1}-1\bigr|$ with 
$\Psi(1)=0$; thus
\begin{align} \label{PSIDEF}
\Psi(a)=\begin{cases}
-\Li_2(a)+(1-a)\log(a^{-1}-1)+\log a-\sfrac 12(\log a)^2+\sfrac{\pi^2}6
& a<1
\\[5pt]
\Li_2(a^{-1})-(a-1)\log(1-a^{-1})+\log a-\sfrac{\pi^2}6
& a>1,
\end{cases}
\end{align}
with the dilogarithm $\Li_2(z)=\sum_{k=1}^\infty\frac{z^k}{k^2}$ ($|z|\leq 1$).
%
%
For $\xi$ in the range $\frac12 < \xi < \frac{1}{1+|w|}$ we set
\begin{equation} \label{FDEF}
F(\xi,w)=\Psi\bigl(\xi(1+|w|)\bigr)+\Psi\bigl(\xi(1-|w|)\bigr)-2\log\xi
+2\Bigl(1+|w|\log\Bigl(\frac{1-|w|}{1+|w|}\Bigr)\Bigr)\xi,
\end{equation}
and for $\xi$ in the range $\frac{1}{1+|w|} < \xi < \frac{1}{1-|w|}$ we set
\begin{equation} \label{GDEF}
G(\xi,w)=\Psi\bigl(\xi(1+|w|)\bigr)-\log\xi+\Bigl(1-|w|+2|w|\log\Bigl(\frac{2|w|}{1+|w|}\Bigr)\Bigr)\xi.
\end{equation}
The density $\Phi(\xi,w)$ in \eqref{alphanull} can now be written as 
\begin{equation} \label{PHIXIWFORMULA}
	\Phi(\xi,w)=
	\begin{cases}
	1- \frac{12}{\pi^2} \xi & (\xi\leq \frac12) \\[5pt]
	1 + \frac{6}{\pi^2} [F(\xi,w)- F(\tfrac12,w)-1] & (\frac12 \leq \xi \leq \frac{1}{1+|w|}) \\[5pt]
	\frac{6}{\pi^2} [G(\xi,w)-G(\frac{1}{1-|w|},w)] & (\frac{1}{1+|w|} \leq \xi \leq \frac{1}{1-|w|}) \\[5pt]
	0 & (\xi\geq \frac{1}{1-|w|}) .  
	\end{cases}
\end{equation}

\begin{figure}
\begin{center}
\includegraphics[width=0.5\textwidth,angle=270]{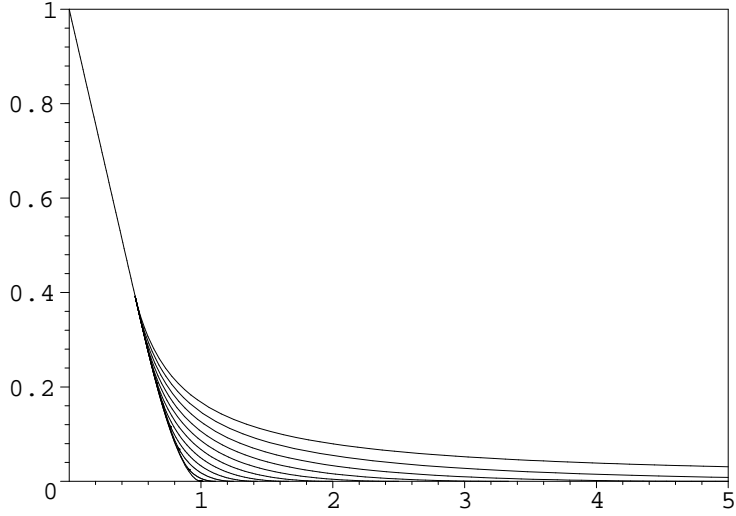}
\end{center}
\caption{Graph of $\Phi(\xi,w)$ as a function of $\xi>0$, with $w=1$, $0.9$, $0.8,\ldots,0$ (top to bottom).}  \label{figxiplot}
\end{figure}

\begin{figure}
\begin{center}
\includegraphics[width=0.5\textwidth,angle=270]{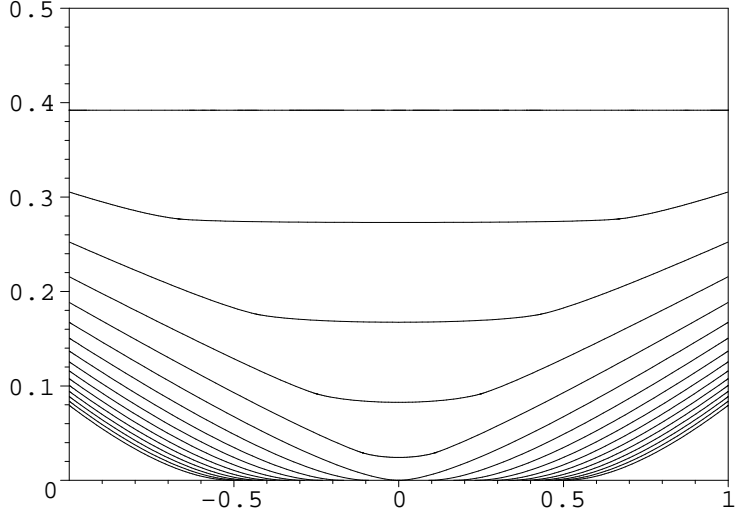}
\end{center}
\caption{Graph of $\Phi(\xi,w)$ as a function of $w\in(-1,1)$, with $\xi=0.5$, $0.6$, $0.7,\ldots,2$ (top to bottom).}  \label{figwplot}
\end{figure}

We have already seen in \cite[Remark 4.5]{partI} that $\Phi(\xi,w)$ is continuous for all $(\xi,w)\in \R_{>0}\times (-1,1)$.
The fact that the formulae on the right hand side of \eqref{PHIXIWFORMULA} agree at the points $\xi=\frac 1{1+|w|}$ can be
verified directly from \eqref{PSIDEF}--\eqref{GDEF}, 
using the identity
$\Li_2(\frac{1+|w|}2)+\Li_2(\frac{1-|w|}2)=
-\log(\frac{1+|w|}2)\log(\frac{1-|w|}2)+\frac{\pi^2}6$. 
The function $\Phi(\xi,w)$ can furthermore be continuously extended to $\R_{>0}\times [-1,1]$ by setting
\begin{equation} \label{PHIXIWFORMULA1}
	\Phi(\xi,\pm 1)=
	\begin{cases}
	1- \frac{12}{\pi^2} \xi & (\xi\leq \frac12) \\[5pt]
	1+\frac{6}{\pi^2} [\Psi(2\xi)-\log(2\xi)-1] & (\xi \geq \frac12).  
	\end{cases}
\end{equation}

We finally remark that the function $\Phi(\xi,0)$ was computed in \cite[Prop.\ 3]{SV} in a different context; set
$\Phi(\xi,0)=f_1^{\text{box},\SL_2}(4\xi)$ in the notation of that paper.

\centerline{-----}

Dahlqvist \cite{Dahlqvist97}, Boca, Gologan and Zaharescu \cite{Boca03}, and Boca and Zaharescu \cite{Boca07} have obtained explicit formulae for the limiting distribution of the free path lengths in dimension two. These can be recovered from the above expressions via the relations (cf.~\cite[Remark 4.3]{partI})
\begin{equation}\label{FP1}
	\overline\Phi_\vecnull(\xi)=\frac12 \int_{-1}^1\int_{-1}^1 \Phi_\vecnull(\xi,w,z)\, dw\, dz,
\end{equation}
for the free path length between consecutive collisions,
\begin{equation}\label{FP2}
	\Phi(\xi)=\int_{-1}^1 \Phi(\xi,w)\, dw = 2 \int_\xi^\infty \overline\Phi_\vecnull(\eta)\,d\eta,
\end{equation}
for the free path length from a generic initial point inside the billiard domain, and 
\begin{equation}\label{FP3}
	\Phi_\vecnull(\xi)=\int_{-1}^1 \Phi_\vecnull(\xi,0,z)\, dz 
\end{equation}	
for the free path length of a particle starting at a lattice point (with the scatterer removed).

\centerline{-----}

We conclude with a brief asymptotic analysis of the collision kernels, when $\xi\to\infty$ (the limit $\xi\to 0$ is trivial). We assume in the following that $\xi> 1$. A short calculation shows that, if $w z\leq 0$, we have $\Phi_\vecnull(\xi,w,z)=0$. Otherwise, for $w z> 0$, 
\begin{equation}
	\Phi_\vecnull(\xi,w,z)
=\frac{6}{\pi^2}
\Upsilon\Bigl(\frac{\xi^{-1}+\min(|w|,|z|)-1}{|w|+|z|}\Bigr)=\frac{6}{\pi^2}
\Upsilon\Bigl(\frac{1-\max(u,y)}{2\xi-(u+y)}\Bigr),
\end{equation}
where $u=\xi(1-|w|)$, $y=\xi(1-|z|)$. Thus
\begin{equation} \label{PHI0ASYMPT}
	\Phi_\vecnull(\xi,w,z)
=\frac{3}{\pi^2} 
\begin{cases}
\big(1-\max(u,y)\big) \xi^{-1} + O(\xi^{-2}) & (wz>0,\; y,u\in[0,1)) \\
0  & (\text{otherwise}) ,
\end{cases}
\end{equation}
uniformly with respect to $w,z\in(-1,1)$ as $\xi\to\infty$.
A simple integration yields for \eqref{innerin}
\begin{equation}\label{asymp1}
\int_{-1}^1 \Phi_\vecnull(\xi,w,z) \, dz =
\frac{3}{2\pi^2} 
\begin{cases}
\big(1-u^2\big) \xi^{-2} + O(\xi^{-3}) & (u\in[0,1)) \\
0  & (\text{otherwise}) .
\end{cases}	
\end{equation}
We substitute $\eta=t\xi$ in \eqref{alphanull} and apply \eqref{asymp1} to obtain
\begin{equation}\label{asymp2}
\begin{split}
	\Phi(\xi,w) & =\xi \int_1^\infty \int_{-1}^1 \Phi_\vecnull(t\xi,w,z) \, dz\, dt \\
	& = \frac{3}{2\pi^2}\, \xi \int_1^{1/u} \bigg( (1-u^2 t^2) (t\xi)^{-2} + O\big((t\xi)^{-3}\big)\bigg) \, dt \\
	& = \frac{3}{2\pi^2} (1-u)^2 \xi^{-1} +O(\xi^{-2})
\end{split}
\end{equation}
if $u\in[0,1)$, and $\Phi(\xi,w)=0$ otherwise.
Both \eqref{asymp1} and \eqref{asymp2} hold
uniformly with respect to $w\in(-1,1)$ as $\xi\to\infty$.
These asymptotics can of course also be obtained by expanding \eqref{innerin} and \eqref{PHIXIWFORMULA}, respectively.

\end{document}